\documentclass[11pt,a4paper]{article}
\usepackage[T1]{fontenc}
\usepackage{lmodern}
\usepackage[margin=24mm]{geometry}
\usepackage{amsmath,amssymb,amsthm,mathtools}
\usepackage{microtype,booktabs,graphicx}
\usepackage[colorlinks=true,allcolors=blue]{hyperref}
\newtheorem{theorem}{Theorem}
\newtheorem{lemma}[theorem]{Lemma}
\newtheorem{proposition}[theorem]{Proposition}

\DeclareMathOperator{\Rep}{Rep}
\DeclareMathOperator{\Tr}{Tr}

\DeclareMathOperator{\End}{End}
\DeclareMathOperator{\FPdim}{FPdim}
\newcommand{\HH}{\mathcal H}
\newcommand{\one}{\mathbf1}
\newcommand{\CC}{\mathbb C}
\newcommand{\ZZ}{\mathbb Z}
\newcommand{\QQ}{\mathbb Q}
\newcommand{\PF}{\mathrm{PF}}
\title{Haagerup conformal field theory at central charge eight}
\author{Tzu-Chen Huang}
\date{\today}
\hypersetup{pdftitle={Haagerup conformal field theory at central charge eight},pdfauthor={Tzu-Chen Huang}}
\begin{document}
\maketitle
\begin{abstract}
We construct a chiral algebra $U$ of central charge eight inside the
lattice vertex operator algebra $V_{E_6\oplus A_2}$. It is generated by
two commuting affine $\mathfrak{sl}_2$ current algebras at levels $1$ and $39$,
the stress tensor of the lattice theory, and a single weight-two field
built from three lattice exponentials. We announce that $U$ is strongly
rational and completely unitary, and that $\Rep(U)$ is equivalent, as a
unitary ribbon category, to the Drinfeld center $Z(\HH_3)$ of the
Haagerup fusion category $\HH_3$. Its characters are the $\gamma=0$
Evans--Gannon vector. From $U$ we obtain a diagonal full CFT, and also a
holomorphic $E_8$ extension on which $\HH_3$ acts by topological defects.
\end{abstract}

\section{Introduction}
Fusion categories describe symmetries of two-dimensional conformal
field theories whose topological lines need not be invertible. Many
familiar fusion categories arise from finite groups, quantum groups or
affine Lie algebras. The Haagerup categories are among the simplest
examples outside these standard constructions, so it is natural to ask
whether some CFT has Haagerup symmetry. Numerical studies of Haagerup anyon chains and
lattice models found evidence for such a CFT at $c\approx2$
\cite{HLOTT,VLVWOHV}. Here we give an explicit construction of a Haagerup CFT at $c=8$,
starting from its chiral algebra.

Write $\HH_3$ for the unitary category with simples $\one,\alpha,\alpha^2,\rho,\alpha\rho,\alpha^2\rho$ and fusion rules
\begin{equation}\label{eq:HI}
 \alpha^3=\one,\qquad \rho\alpha=\alpha^2\rho,\qquad
 \rho^2=\one+\rho+\alpha\rho+\alpha^2\rho,
 \qquad d_\rho=\frac{3+\sqrt{13}}2.
\end{equation}
The fusion ring alone does not determine the category: $\HH_2$ has the
same fusion ring but a different associator, and the two are Morita
equivalent \cite{Izumi,GS}. Evans and Gannon studied the modular data of
the center and worked out the possible character vectors of a Haagerup
VOA at $c=8$ \cite{EG}. M\"oller and Rayhaun
\cite{MR} observed that realizing $Z(\HH_3)$
as the representation category of a positive, strongly rational $c=8$
VOA is equivalent to finding an $\HH_3$ symmetry of the chiral
$(E_8)_1$ theory. This relation was discussed further by Gannon and
Rayhaun \cite{GR}. Albert, Honda, Kaidi and Zheng
\cite{AHKZ} proposed that gauging the diagonal Haagerup symmetry of
the non-chiral $(E_8)_1$ model gives the $c=8$ theory predicted by
Evans and Gannon.
Our approach is to write down a subalgebra of a lattice vertex operator algebra (VOA), so
that every product we use is inherited from a fixed, well-understood
theory. Our main result is the following.

\begin{theorem}\label{thm:main}
Let $U$ be the algebra defined in \eqref{eq:defU}. With the finite
exact lattice inputs specified in Sections~\ref{sec:closure}--\ref{sec:branching},
$U$ is strongly rational and completely unitary, with
\begin{equation}\label{eq:main}
 c=8,\qquad U_1\cong\mathfrak{sl}_2\oplus\mathfrak{sl}_2,
 \qquad \Rep(U)\simeq Z(\HH_3)
\end{equation}
as unitary ribbon categories. Its twelve irreducible characters are
the $\gamma=0$ Evans--Gannon vector. There is a diagonal unitary full
CFT with $c_L=c_R=8$. There is also a unitary holomorphic extension
$H'\supset U$, isomorphic to $V_{E_8}$, whose extension algebra
$A_3\in\Rep(U)$ has all-module category $\Rep(U)_{A_3}\simeq\HH_3$
and whose algebra-bimodule defects form
$\HH_3\boxtimes\HH_3^{\mathrm{op}}$.
\end{theorem}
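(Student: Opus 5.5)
The plan is to reverse the Möller--Rayhaun correspondence \cite{MR}. We first produce the holomorphic extension $H'\supset U$ together with an action of $\HH_3$ on $H'$ by topological defects, and then identify $U$ as the corresponding ``fixed-point'' subalgebra. Concretely, the first step is the closure computation of Section~\ref{sec:closure}. Inside $V_{E_6\oplus A_2}$ we check, using the finite exact lattice data, that the generators in \eqref{eq:defU} close under all OPE modes up to a bounded weight. These generators are the two affine $\mathfrak{sl}_2$ currents at levels $1$ and $39$, the lattice stress tensor $\omega$, and the weight-two field built from three exponentials. Closure should yield $U_1\cong\mathfrak{sl}_2\oplus\mathfrak{sl}_2$ and $c=8$, since $U$ contains the Virasoro vector of the rank-$8$ lattice theory. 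Unitarity of $U$ is inherited: the lattice VOA carries a positive invariant form, and $U$ is generated by vectors stable under the antilinear involution, so $U$ is a unitary subalgebra.

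The second step is the branching of Section~\ref{sec:branching}. We decompose $V_{E_6\oplus A_2}$, and its modules up to the needed weight, into modules for the coset pieces $L_{\mathfrak{sl}_2}(1,\cdot)\otimes L_{\mathfrak{sl}_2}(39,\cdot)\otimes(\text{commutant})$. From this we read off twelve candidate irreducible $U$-modules with conformal weights $h_i \bmod 1$ and $q$-expansions. We then match these against the $\gamma=0$ Evans--Gannon vector \cite{EG}. Modular invariance of the span and uniqueness of the vector-valued modular form with that leading behaviour give equality of characters once enough coefficients agree.

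For rationality, $C_2$-cofiniteness will follow by exhibiting $U$ as a finite-index extension of $L_{\mathfrak{sl}_2}(1)\otimes L_{\mathfrak{sl}_2}(39)\otimes W$, where $W$ is a rational coset. Alternatively, we realize $U$ as the commutant of a rational subalgebra in a lattice VOA and apply the known coset rationality results. Once $U$ is strongly rational and unitary, $\Rep(U)$ is a unitary modular tensor category. Its $S,T$ matrices are fixed by the character vector, because the characters are linearly independent. These match the modular data of $Z(\HH_3)$.

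The main obstacle is upgrading ``same modular data'' to an equivalence of ribbon categories. Modular data does not determine the category in general, and in particular it does not by itself exclude $\HH_2$-type twists. Here I would exhibit the commutative algebra $A_3=\bigoplus_i M_i$ in $\Rep(U)$ given by $H'$ as a $U$-module. We identify $H'\cong V_{E_8}$ as the unique holomorphic $c=8$ VOA. We then show, via the module categories of local and nonlocal $A_3$-modules, that $\Rep(U)_{A_3}$ is a fusion category whose fusion ring is \eqref{eq:HI}. To pin down $\HH_3$ rather than another category with the same ring, we identify the object $\alpha$ of order $3$ and the fixed-point structure from the branching data, and invoke Izumi's classification of categories with these fusion rules \cite{Izumi,GS}. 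It then follows that $\Rep(U)\simeq Z(\Rep(U)_{A_3})\simeq Z(\HH_3)$, since $A_3$ is Lagrangian: $\FPdim A_3=\sqrt{\FPdim\Rep(U)}$. The bimodule statement $\HH_3\boxtimes\HH_3^{\mathrm{op}}$ is standard for Lagrangian algebras. Finally, the diagonal full CFT is the Cardy/diagonal Frobenius algebra $\bigoplus_i M_i\otimes M_i'$, which exists for any unitary modular $\Rep(U)$.
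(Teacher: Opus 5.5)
Your proposal has genuine gaps. The most serious are the rationality route, the missing proof of complete unitarity, and the order in which you obtain the modules, the characters and the category.

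\textbf{Analytic side.} Your rationality step is where the argument would actually fail. Writing $U$ as a finite extension of $\mathcal K\otimes W$ with $W=\mathrm{Com}_U(\mathcal K)$, of central charge $8-1-117/41=170/41$, only moves the problem. Nothing you have shows that $W$ is rational or $C_2$-cofinite. The coset alternative is not available either: $U$ contains the full stress tensor of $V_{E_8}$, so its commutant there is $\CC\one$, and you name no other ambient theory in which $U$ would be a coset.

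The paper's key idea is different. The identity \eqref{eq:positive} writes $\omega_U$ as a positive combination of conformal vectors of embedded regular unitary subalgebras (the affine factors and averaged parafermions). This bounds $L(0)$ on $C_2(U)^\perp$ and on module tops. It gives $C_2$-cofiniteness, strictly positive nonvacuum lowest weights (needed later for pseudo-unitarity of $\mathcal F$ and CFT type of $H'$) and a projective unit. McRae's finite braided Grothendieck--Verdier category and Etingof--Penneys rigidity then give semisimplicity.

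Two further points on this side:
\begin{itemize}
\item Closure ``up to a bounded weight'' does not give $\dim U_1=6$, since products of high-weight fields can land in weight one. You need a strong-generation criterion like Lemma~\ref{lem:closure} that propagates the finite checks to all weights.
\item You never prove complete unitarity. ``Strongly rational and unitary'' does not imply that $\Rep(U)$ is a unitary modular category. One needs every simple module unitarizable and the fusion forms positive. The paper obtains this only through the net comparison, reconstruction of all twelve sectors, the annulus identity \eqref{eq:annulus} and Tener's positivity theorem.
\end{itemize}

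\textbf{Categorical side.} Your order of steps does not work. Decomposing $V_{E_6\oplus A_2}$ and its nine modules only produces $U$-modules with weights in $\frac13\ZZ$. So the six sectors $d_1,\dots,d_6$, with $h\in\frac1{13}\ZZ$, never appear, and nothing shows the list of simples is complete. The characters also cannot fix $S$: they are not linearly independent, since $\chi_a=\chi_{c_0}$.

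The paper determines the category first:
\begin{itemize}
\item The bound $h\le 17/10$ from \eqref{eq:topbound} and the irreducibility checks give $H\downarrow_U=\one\oplus X\oplus Y\oplus Y^*$.
\item Holomorphicity gives $\Rep(U)\simeq Z(\mathcal F)$ with $\mathcal F=\mathcal C_H$.
\item The rank of $\mathcal F$ is $4$ or $6$; the near-group cases are excluded; rank six is noncommutative with three invertibles; and the Frobenius--Schur/Galois bound forces $m=1$.
\end{itemize}
Only this yields the fusion ring \eqref{eq:HI}, which you simply assert. The characters come last, as the unique vector-valued modular form with the allowed polar part: $\boldsymbol\chi=\Xi_{:,0}+\gamma\Xi_{:,d_3}$, with $\gamma=0$ forced by $\dim U_1=6$.

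Branching data also cannot separate $\HH_2$ from $\HH_3$. They share the fusion ring and both have a trivially twisted $\ZZ/3$, so for the lattice $E_8$ one only gets ``$\HH_2$ or $\HH_3$''. The paper uses their Morita equivalence to get $\Rep(U)\simeq Z(\HH_3)$ regardless. It then builds $H'$ by transporting the canonical Lagrangian algebra $I_{\HH_3}(\one)$ into $\Rep(U)$, and identifies $H'\cong V_{E_8}$ by Dong--Mason. This $H'$ need not be the lattice extension. Finally, your opening plan of first producing an $\HH_3$-action on $V_{E_8}$ is circular, since that action is exactly what the theorem delivers.
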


By strong rationality we mean that $U$ is simple, of CFT type,
self-contragredient, rational and $C_2$-cofinite. Complete unitarity
means that every simple module carries a positive invariant form and
that the canonical fusion forms are positive. In the sketches below we
keep apart two kinds of input: a finite number of identities checked
by computer algebra, and the infinite-dimensional statements we deduce
from them. The computer-algebra inputs are exact identities in lattice
coordinates. Matching finitely many character coefficients would not be
enough for our purposes. In this note we give the defining fields of $U$ and an outline of
the argument identifying the resulting CFT. The full coefficient lists, and records that
allow the computations to be replayed, will be made available later
with a detailed version. 

The $E_8$ theory of Theorem~\ref{thm:main} can furthermore be compared with the
numerical modular bootstrap of \cite{AHKZ}, which bounds scalar
Virasoro-primary gaps in theories with $\HH_3$ symmetry. Let $\HH_3$ act through the first factor
in \eqref{eq:defects}, and write $\mathcal V_\mu$ for the tube-sector
multiplicity spaces as in \cite{AHKZ}. The twelve tube-sector partition
functions are $\chi_\mu\overline{\chi_{E_8}}$. The labels $(0,b,a,c_0)$
of Section~\ref{sec:characters} correspond to $(1,\pi_1,\pi_2,\sigma_1)$
in \cite{AHKZ}. Since $\chi_{E_8}$ has integral weights, scalars occur
only in these four sectors. In each, the lightest nonidentity scalar
Virasoro primary has $(h,\bar h)=(1,1)$. Hence the
$\HH_3$-invariant scalar gap and the scalar gap in
$\mathcal V_{\pi_2}\cup\mathcal V_{\sigma_1}$ are both $\Delta=2$, at
$c_L=c_R=8$. Figure~\ref{fig:bootstrap} places these points on the
published bounds \cite[Figs.~2 and~3]{AHKZ}; both lie below the bounds
and are therefore not excluded. This is only a consistency check: $(8,2)$ is not at a
kink of either curve, and we have not rerun the semidefinite programs.
\begin{figure}[htbp]
 \centering
 \includegraphics[width=\textwidth]{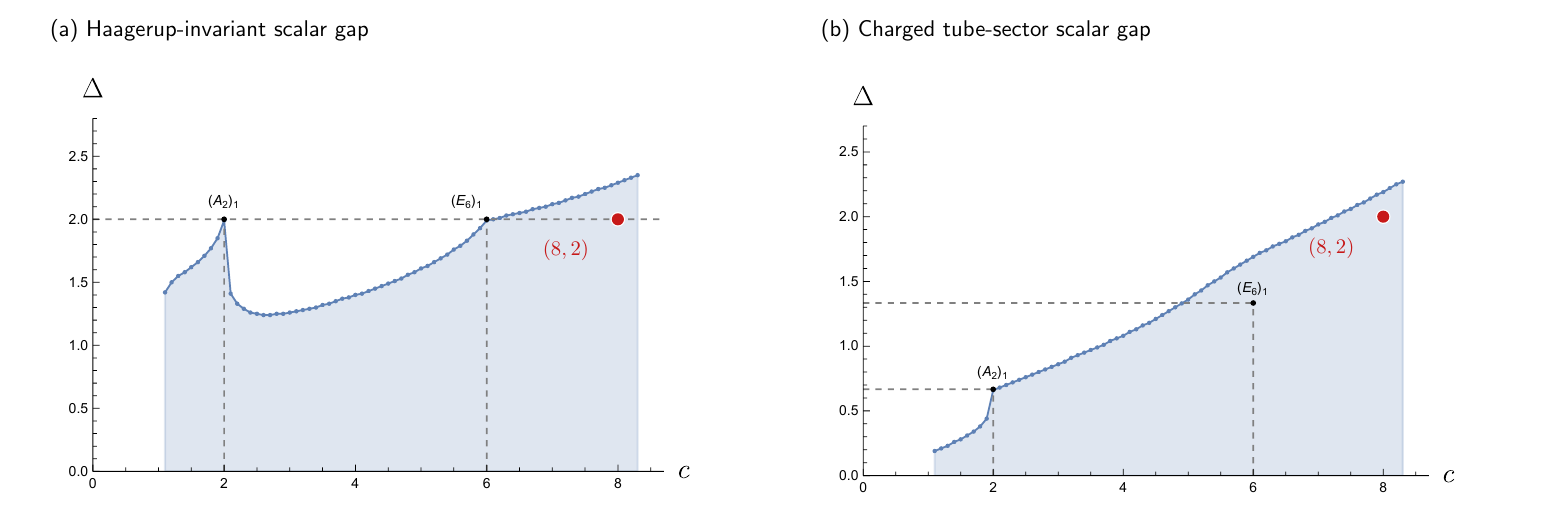}
 \caption{Scalar Virasoro-primary gaps for the first-factor $\HH_3$
 symmetry of the $c_L=c_R=8$ $E_8$ theory, shown in red.
 Original panels from J.~Albert, Y.~Honda, J.~Kaidi, and Y.~Zheng,
 \emph{Haagerup Symmetry in $(E_8)_1$?}, Phys.~Rev.~Lett.~136,
 091603 (2026), \href{https://doi.org/10.1103/6tzz-tvp7}{doi:10.1103/6tzz-tvp7},
 reproduced under CC~BY~4.0. Red markers and labels have been added.
 The left panel bounds $\mathcal V_1$; the right panel bounds
 $\mathcal V_{\pi_2}\cup\mathcal V_{\sigma_1}$.}
 \label{fig:bootstrap}
\end{figure}
\section*{Note added in version $2$}
Version 1 of this manuscript was submitted to arXiv on October 9, 2026, at 20:16 CST.
On October 11, 2026, the preprint \cite{Riesen} was announced, independently establishing
the construction of the Haagerup VOA.
\section{The explicit chiral algebra}\label{sec:definition}
Let $L=E_6\oplus A_2$, with simple-root basis $\alpha_0,\ldots,\alpha_7$.
The Gram matrix $G$ has diagonal entries $2$ and entries $-1$ on the
edges
\[
 (0,1),\ (1,2),\ (2,3),\ (3,4),\ (2,5),\ (6,7),
\]
and zero elsewhere. Set $B=V_L\subset H=V_{E_8}$, using the usual
index-three lattice extension, and fix
\begin{equation}\label{eq:cocycle}
 \epsilon(a,b)=(-1)^{\sum_i a_i b_i+\sum_{i>j}a_i b_jG_{ij}}.
\end{equation}
Coordinates of lattice momenta refer throughout to this basis. The
VOA convention is $Y(v,z)=\sum_n v_nz^{-n-1}$ and $T=L(-1)$; in
particular
\begin{equation}\label{eq:lattice}
 Y(e^a,z)e^b=\epsilon(a,b)z^{(a,b)}
 \exp\!\left(\sum_{n>0}\frac{a(-n)}n z^n\right)e^{a+b}.
\end{equation}
Together with the Heisenberg relations this fixes the products used below.

Put $\alpha_L=(-1,-2,-3,-2,-1,-2,0,0)$,
$c=(5,8,9,8,5,0,2,2)$ and $h_R=\sum_i c_i\alpha_i$. In a rational
torus gauge take the currents
\begin{align}
 (H_L,E_L,F_L)&=\Bigl(\alpha_L(-1)\one,e^{\alpha_L},-e^{-\alpha_L}\Bigr),\label{eq:left}\\
 (H_R,E_R,F_R)&=\Bigl(h_R(-1)\one,
       \sum\nolimits_{c_i\ne0}e^{\alpha_i},-\sum\nolimits_i c_i e^{-\alpha_i}\Bigr).
       \label{eq:right}
\end{align}

The two triples commute with each other. Their Cartan elements have
norms $2$ and $78$, and $(E_L)_1F_L=\one$, $(E_R)_1F_R=39\cdot\one$, so
together they generate the simple affine algebra
$\mathcal K=A_{1,1}\otimes A_{1,39}$, where
$A_{1,k}=L_k(\mathfrak{sl}_2)$. The level $39$ can be understood as
$35+4$, coming from the principal embeddings into $A_5$ and $A_2$.

Define three norm-four momenta and a field
\begin{gather}
 a=\alpha_6+\alpha_7-\alpha_5,\qquad
 b=\alpha_3+\alpha_4-\alpha_5,\qquad
 c'=\alpha_0+\alpha_1-\alpha_5,\notag\\
 u=e^a+\frac{\sqrt{13}+3\sqrt{-3}}{20}e^b
          +\frac{\sqrt{13}-3\sqrt{-3}}{20}e^{c'}.
          \label{eq:u}
\end{gather}

This $u$ is a highest-weight vector of conformal weight two, with Cartan
weights $(1,13)$; equivalently, it generates current multiplets of
dimensions $(2,14)$. We define

\begin{equation}\label{eq:defU}
 \boxed{\quad U=\langle\mathcal K,\omega_B,u\rangle\subset B,\qquad
 \omega_B=\frac12\sum_{i,j}(G^{-1})_{ij}
               \alpha_i(-1)\alpha_j(-1)\one.\quad}
\end{equation}
We use the stress tensor $\omega_B$ of the lattice theory, so $c=8$.
The Sugawara conformal vector of $\mathcal K$ has central charge
$1+117/41<8$ and is not the conformal vector used for $U$.

The rational gauge above is convenient for exact arithmetic, but the
positive form has to be transported to it from the standard unitary
gauge. Set
$(\lambda_0,\ldots,\lambda_7)=(5,8,9,8,5,1,2,2)$
and let $g$ multiply $e^a$ by
$\prod_i\lambda_i^{-a_i/2}$, fixing oscillators. Transport the
lattice Hermitian form and PCT by $g$. On real oscillator monomials,
\begin{equation}\label{eq:PCT}
 \theta_g\bigl(h_1(-n_1)\cdots h_r(-n_r)e^a\bigr)
 =(-1)^r\Bigl(\prod_i\lambda_i^{a_i}\Bigr)
       h_1(-n_1)\cdots h_r(-n_r)e^{-a},
\end{equation}
antilinearly on coefficients. The form is
$\langle v,w\rangle_g=\langle g^{-1}v,g^{-1}w\rangle$.
In the original unitary gauge, $g^{-1}u$ is proportional to
$e^a+\zeta e^b+\bar\zeta e^{c'}$, with
$\zeta=(\sqrt{13}+3i\sqrt3)/\sqrt{40}$ and $|\zeta|=1$.
So the coefficients of $u$ amount to a relative phase, and there is no
continuous parameter. The positive real character that defines $g$
extends to $E_8$, and we use this one unitary structure for the whole
chain $U\subset B\subset H$.
\section{Finite closure}\label{sec:closure}
The first thing to check is that $U$ is really smaller than $B$: a
priori, repeated OPEs of the generators could regenerate the other
eighty currents of $B$, and a computation in the first few weights
cannot exclude this. We therefore use a criterion that controls all
weights at once. 

Set $S=\bigoplus_d S_d$, where $S_d\subset B_d$.
The \emph{differential normal span} $\mathcal N(S)$ is the smallest
linear subspace of $B$ containing $S$ and the vacuum and closed
under translation and $(-1)$ products.

\begin{lemma}[Recursive strong generation]\label{lem:closure}
Suppose $h\ge2$ and $S_d$ are finite-dimensional for $1\le d\le h$. Write
$K=S_1$ and $A=S_2$. Assume all singular products
$k_ns,a_ns$, for $k\in K$, $a\in A$, $s\in S$ and $n\ge0$, belong
to $\mathcal N(S)$. Suppose further that, for $3\le d\le h$,
\begin{equation}\label{eq:recursive}
 S_d\subset A_0S_{d-1}+\mathcal N_d(S_1+\cdots+S_{d-1}).
\end{equation}
Then $\mathcal N(S)$ is a vertex subalgebra.
\end{lemma}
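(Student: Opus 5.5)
The plan is to reduce everything to one statement: for every $x\in S$, every $y\in\mathcal N(S)$ and every $n\ge0$, the product $x_ny$ lies in $\mathcal N(S)$. Granting this, closure under all $n$-products follows by a standard strong-generation argument. Negative modes are harmless, because $x_{-1-j}=(T^jx/j!)_{-1}$ and $\mathcal N(S)$ is $T$-stable. So $\mathcal N(S)$ is spanned by the vacuum and by iterated $(-1)$-products $x^1_{-1}x^2_{-1}\cdots x^r_{-1}\one$ with each $x^i\in T^{\ge0}S$. I will use this spanning set throughout, and induct on the length $r$ and on conformal weight. All spaces are graded with finite-dimensional pieces and weights bounded below, so these inductions are well-founded.

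\emph{Step 1 (generators of weight $1$ and $2$).} Take $x\in K\cup A$ and $y=z_{-1}w$ with $z\in T^{\ge0}S$ and $w$ of shorter length. The commutator formula gives
\[
 x_n(z_{-1}w)=z_{-1}(x_nw)+\sum_{j\ge0}\binom nj (x_jz)_{n-1-j}\,w .
\]
The first term is handled by induction on length. For $z=T^kz_0$ with $z_0\in S$, the relation $x_jT=Tx_j+jx_{j-1}$ reduces $x_jz$ to singular products $x_iz_0$, and these lie in $\mathcal N(S)$ by hypothesis. The remaining terms are $m$-products of an element of $\mathcal N(S)$ with $w$, for possibly negative $m$. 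Here I will use that $\operatorname{wt}(x)\le2$: the element $x_jz$ has weight $\operatorname{wt}(z)+\operatorname{wt}(x)-j-1\le\operatorname{wt}(z)$. Hence these terms can be fed into the general product statement of Step 3 at strictly smaller total weight, or at equal weight but shorter length. To make this work I will run Steps 1--3 as a single simultaneous induction on $\operatorname{wt}(y)$ plus the weight of the left factor.

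\emph{Step 2 (higher generators).} Take $s\in S_d$ with $3\le d\le h$, and use \eqref{eq:recursive} to write $s=\sum a_0s'+m$, where $a\in A$, $s'\in S_{d-1}$ and $m\in\mathcal N_d(S_1+\cdots+S_{d-1})$. For the first part I will use
\[
 (a_0s')_n=[a_0,s'_n],
\]
so that $(a_0s')_ny=a_0(s'_ny)-s'_n(a_0y)$. This lies in $\mathcal N(S)$ by Step 1 together with induction on $d$. For $m$, which is built from lower generators by $T$ and $(-1)$-products, I will use the iterate (Borcherds) formula
\[
 (z_{-1}w)_n=\sum_{j\ge0}\bigl(z_{-1-j}w_{n+j}+w_{n-1-j}z_j\bigr)
\]
and $(Tz)_n=-nz_{n-1}$. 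This reduces $m_ny$ to modes of generators of weight $<d$ acting on $\mathcal N(S)$.

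\emph{Step 3 (general products).} Once $x_n\mathcal N(S)\subset\mathcal N(S)$ holds for all $x\in S$ and $n\ge0$, the same iterate formula, applied by induction on the length of the left factor $u\in\mathcal N(S)$, gives $u_nv\in\mathcal N(S)$ for all $n\in\ZZ$. Here $\mathcal N(S)$ is closed under $(-1)$-products by definition, and it is closed under negative modes because of $T$-stability. Hence $\mathcal N(S)$ is a vertex subalgebra. The main obstacle is the bookkeeping of the simultaneous induction, in particular checking that the terms $(x_jz)_{n-1-j}w$ in Step 1 really decrease the chosen ordering. This is exactly where $\operatorname{wt}(x)\le2$ is needed, and why the higher $S_d$ must be reached through \eqref{eq:recursive} rather than assumed to have controlled singular products. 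If the naive weight order fails, I would instead order by the pair (weight of the result, filtration degree in Li's standard filtration), for which $C_2$-type leading terms decrease.
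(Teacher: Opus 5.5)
Your route is the paper's. You get closure of $K$- and $A$-modes from the hypothesis, reach higher generators through $(a_0s')_n=[a_0,s'_n]$, treat normal remainders by Wick/Borcherds expansions, and finish with the standard strong-generation argument. However, the bookkeeping you single out as the main obstacle is misplaced. The inequality $\operatorname{wt}(x_jz)\le\operatorname{wt}(z)$ fails for $x\in A$ and $j=0$, where $\operatorname{wt}(x_0z)=\operatorname{wt}(z)+1$. It is also not needed: $\operatorname{wt}(x_jz)+\operatorname{wt}(w)=\operatorname{wt}(x)+\operatorname{wt}(y)-j-1$, so the terms $(x_jz)_{n-1-j}w$ lower the total weight for every generator $x$. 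Step 1 is restricted to $K\cup A$ only because the hypothesis supplies $x_jz_0\in\mathcal N(S)$ for those $x$, not because of any weight bound.

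The terms that keep the total weight fixed are in Step 2. For $s\in S_d$, the pairs $(s',a_0y)$ and $(a,s'_0y)$ both have total weight $d+\operatorname{wt}(y)$, the same as $(s,y)$. So a single induction on total weight, as you state it, does not close there. You need the lexicographic order (total weight, then left weight), which is exactly what the paper uses. Your ``induction on $d$'' must be read as the inner induction at fixed total weight. The term $a_0(s'_0y)$ causes no trouble, because $a_0\mathcal N(S)\subset\mathcal N(S)$ holds outright: $a_0$ is a derivation commuting with $T$, or equivalently the $n=0$ commutator formula involves only $(-1)$-products.

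Finally, your claim that right-nested monomials span $\mathcal N(S)$ does not follow from the definition. Expanding $(z_{-1}p)_{-1}q$ produces the singular products $p_{j-1}q$ and $z_jq$. Your Steps 1 and 3 go through unchanged if you induct on how elements of $\mathcal N(S)$ are built, as $Tp$ or $p_{-1}q$ from earlier elements, rather than on right-nested length.
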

\begin{proof}[Proof sketch]
The zero modes $a_0$ are derivations commuting with $T$, so the first
hypothesis already implies $a_0\mathcal N(S)\subset\mathcal N(S)$.
Induct on the total weight of a pair of generators, and then on its
left weight. Replace a higher left generator by $a_0b$ plus a lower
normal expression and use
\begin{equation}\label{eq:Jacobi}
 (a_0b)_nt=a_0(b_nt)-b_n(a_0t).
\end{equation}
The first term has smaller input weight. In the second term every
composite or derivative in $a_0t$ reduces by the filtered Wick identity;
a single generator leaves the same total weight and a smaller left
weight. Normal remainders are handled by the same induction using
\[
 [x_\lambda:yz:]=:[x_\lambda y]z:+:y[x_\lambda z]:
                  +\int_0^\lambda[[x_\lambda y]_\mu z]\,d\mu.
\]
The integral has smaller total input weight. All generator singular
products therefore close, and Wick induction extends closure to the
entire normal span \cite{DSK}.
\end{proof}

We now describe the input to the lemma. The exact lattice calculation
gives a PCT-stable space
$S_2\subset U_2$ of dimension $120$, generated from the specified
currents, $\omega_B$ and $u$ by recorded mode recipes. Its current
multiplets are
\[
 (1,5),\ (1,9),\ (1,9),\ (1,13),\ (2,2),\ (2,6),\ (2,8),\ (2,14).
\]
They contribute $96$ states, in addition to $22$ affine vacuum
descendants and two neutral fields. The constructed space $S_3$ has
dimension $956$, with
\begin{equation}\label{eq:S3}
 N_3=K_{-1}S_2+TS_2,\qquad \dim N_3=634,\qquad
 S_3=N_3+(S_2)_0S_2.
\end{equation}
Applying current projectors to $(S_2)_0S_3$ gives a current triplet
$P_4\subset U_4$ of highest weight $(0,2)$. Set
\begin{align}
 N_4&=K_{-1}S_3+TS_3+(S_2)_{-1}S_2,&W_4&=N_4+P_4,\notag\\
 N_5&=K_{-1}W_4+TW_4+(S_2)_{-1}S_3.&&\label{eq:normal}
\end{align}
By construction $P_4\subset(S_2)_0S_3+N_4$.
Apart from the singular-product identities in lower weight, the two key checks are
\begin{equation}\label{eq:gates}
 (S_2)_0S_3\subset W_4,\qquad (S_2)_0P_4\subset N_5.
\end{equation}
We verify these as equalities of vectors over
$\QQ(\sqrt{13},\sqrt{-3})$, and we record how each field was generated.

To give a sense of the size of the computation: the quotient
$S_3/N_3$ has dimension $322$. For the first check, we verify the
inclusion on a set of seed vectors. Together with the two
commuting current raising operators, these span the quotient. If a quotient
$M$ of it satisfies $M=E_LM+E_RM$, then $M=0$ because the raising
operators are jointly nilpotent, so the identities propagate from the
seeds. The remaining classes that are singlets under both currents are
recovered using exact current projectors and lowering identities that
are checked separately. The second check uses the way the triplet was
constructed, the Jacobi identities, and an explicit witness for the
bottom current.

We now apply Lemma~\ref{lem:closure} to $K,S_2,P_3,P_4$, where $P_3$
is a space of lifts of $S_3/N_3$ to $U_3$. The outputs we need have
weight at most five. The subalgebra produced by the lemma contains the
defining fields of $U$ and is contained in the algebra they generate,
so it is $U$. Hence $U$ is strongly generated in weights at most four,
and
\begin{equation}\label{eq:lowgrades}
 \dim U_1=6,\qquad \dim U_2=120,\qquad \dim U_3=956.
\end{equation}
Since the generators are PCT stable, the vacuum form on $U$ is positive
and $U$ is simple. In particular $U\ne B$, because $\dim B_1=86$.

\section{From positive stress identity to rationality}\label{sec:rationality}

For rationality we need two more pieces of finite input. The first is
that $U$ contains copies of the parafermion algebras
$K(\mathfrak{sl}_2,3)$ and $K(\mathfrak{sl}_2,13)$, which are regular
and unitary \cite{ALY}. Their conformal vectors $p$ and $z$ have
central charges $4/5$ and $8/5$, and on their states their own grading
and translation agree with those inherited from $B$. The second is the
exact weight-two identity
\begin{equation}\label{eq:positive}
 \omega_U=\omega_{A_{1,1}}+\frac{3403}{4563}\omega_{A_{1,39}}
                      +\frac{25}{6}\bar p+\frac{25}{26}\bar z.
\end{equation}
A bar denotes averaging over the compact symmetry group generated
by the current zero modes. This projects each field onto its
invariant component. The group acts unitarily with respect to the inner product
$\langle\cdot,\cdot\rangle_g$ defined in
Section~\ref{sec:definition}. Note that
\eqref{eq:positive} is not a decomposition of $\omega_U$ into
commuting pieces. Each average lies in the convex hull of a compact
subset of a finite-dimensional space, so it can be replaced by a
finite convex combination. We can therefore write
$\omega_U=\sum_i t_ie_i$ with all $t_i>0$, where each $e_i$ is the
conformal vector of a regular, unitary subalgebra $A_i$ whose grading
is compatible with that of $U$.

Identifying the second parafermion algebra takes more than matching
its central charge. We check that its weight-three primary $w$, its
stress tensor $z$, and the fields $w_1w$ and $w_1(w_1w)$ have exactly
the singular OPEs of $K(\mathfrak{sl}_2,13)$ in the first row,
including the simple poles and the decoupling relation at weight six
\cite{DLY,Linshaw}. Iterating and using Wick identities give the
remaining OPEs among the generators. Now consider, in the product of
the two simple realizations, the subalgebra generated by the diagonal
pairs. It is of CFT type, so it has a unique maximal proper graded
ideal. The kernels of the two projections therefore coincide, and this
gives the identification. The finite inputs to this argument are
membership in $U$, PCT stability, the complete first rows, and
\eqref{eq:positive}.

\begin{proposition}\label{prop:rational}
The positive identity \eqref{eq:positive}, with these embedded
regular factors, implies that $U$ is strongly rational. Every
nonvacuum simple module has strictly positive lowest weight.
\end{proposition}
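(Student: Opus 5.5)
The plan is to deduce rationality from the decomposition $\omega_U=\sum_i t_ie_i$ with $t_i>0$ established just before the statement. Each $e_i$ is the conformal vector of a regular unitary subalgebra $A_i\subset U$ whose grading is compatible with that of $U$. The argument has three steps: properties of $U$ already known, a positivity estimate on lowest weights, and a finiteness argument giving complete reducibility.

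\emph{Step 1: known properties.} From Section~\ref{sec:closure}, $U$ is of CFT type, simple, and carries a positive invariant form, so it is self-contragredient. $C_2$-cofiniteness should follow from strong generation in weights at most four together with the regular subalgebra $\mathcal K\otimes K(\mathfrak{sl}_2,3)\otimes K(\mathfrak{sl}_2,13)$. I would show that $U$ is a finitely generated module over this subalgebra, or at least that $R(U)=U/C_2(U)$ is finite-dimensional. The parafermion relations, in particular the decoupling relation at weight six, give nilpotency of the images of the strong generators in $R(U)$. Failing that, I would use Miyamoto's criterion: a $C_2$-cofinite subalgebra $W$ of a VOA $V$ that has finite length as a $W$-module forces $V$ to be $C_2$-cofinite.

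\emph{Step 2: positivity of lowest weights.} Let $M$ be a weak (or $\ZZ_{\ge0}$-gradable) $U$-module and take a lowest-weight vector $v$. For each $i$, $M$ restricts to a weak $A_i$-module. Because $A_i$ is regular, $M$ is a direct sum of simple unitary $A_i$-modules. Each of these has lowest $(e_i)_1$-eigenvalue at least $0$, with equality only on the vacuum module. Unitarity of $A_i$ gives that all its simple modules have nonnegative conformal weights. The averaging bar is harmless here: $L(0)$ of $\bar e_i$ equals the group average of the conjugated $(e_i)_1$, which stays bounded below by $0$. Summing with $t_i>0$ gives $L_U(0)\ge0$ on $M$. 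Equality $L_U(0)v=0$ forces $v$ to lie in the vacuum isotypic component of every $A_i$, and in particular of $\mathcal K$. Once Step~1 gives the $C_2$ property, generation by weight-one and weight-two fields should make $v$ vacuum-like, so that $M\cong U$. This yields the second claim: every nonvacuum simple module has strictly positive lowest weight.

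\emph{Step 3: rationality.} For rationality I would show that every $\ZZ_{\ge0}$-gradable module is completely reducible. The restriction to the regular subalgebra $\mathcal K$ is semisimple, with only finitely many isotypic types possible. Combined with $C_2$-cofiniteness, this gives finitely many simple $U$-modules and finite-dimensional weight spaces. To split extensions I would pass to the Zhu algebra $A(U)$. It is finite-dimensional by $C_2$-cofiniteness, and I would argue it is semisimple by constructing a positive-definite invariant trace-form on it. That form comes from the positive forms on the $A_i$-modules and the energy bound $L_U(0)\ge\sum t_i(e_i)_1$. Alternatively, I would use the result that a $C_2$-cofinite CFT-type VOA is rational once all its modules are unitarizable.

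The main obstacle is Step~3, the passage from positivity to semisimplicity. Positivity controls conformal weights, but it does not by itself kill non-split self-extensions. My first attempt would be to show that $L_U(0)$ acts semisimply on any gradable module. Each $(e_i)_1$ is diagonalizable because $A_i$ is regular, but these operators do not commute with one another. So I would instead argue through $C_2$-cofiniteness and the Huang–Miyamoto theory: nilpotent parts of $L(0)$ arise from logarithmic modules, and their generalized eigenspaces would contradict the inequality $\langle L_U(0)v,v\rangle\ge\sum t_i\langle (e_i)_1v,v\rangle$ on unitarizable pieces.
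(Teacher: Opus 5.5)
There are genuine gaps in Steps 1 and 3, and a flawed inference in Step 2. You do not use the identity $\omega_U=\sum_it_ie_i$ at the two places where the paper needs it: $C_2$-cofiniteness and splitting extensions.

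\textbf{Step 1.} The subalgebra $\mathcal K\otimes K(\mathfrak{sl}_2,3)\otimes K(\mathfrak{sl}_2,13)$ you want is not available. The averaging in $\bar p,\bar z$ is there because $p$ and $z$ are not invariant under the current zero modes, so these parafermion algebras do not commute with $\mathcal K$. Even a commuting product would have central charge $1+\tfrac{117}{41}+\tfrac{12}{5}<8$. Then $U$ would have infinite length over it, so neither finite generation nor Miyamoto's criterion applies. The weight-six decoupling relation also says nothing about the other strong generators in $S_2$, $P_3$, $P_4$. What is missing is a direct weight bound on $C_2(U)^\perp$:
\begin{itemize}
\item the $C_2$-quotients of the finitely many simple $A_i$-modules sit in $A_i$-energies $\le D_i$;
\item so $v\in U_n\cap C_2(U)^\perp$ has no component of $A_i$-energy above $D_i$;
\item self-adjointness of $L^{e_i}(0)$ for the positive form on $U$ then gives $n\|v\|^2=\sum_it_i\langle L^{e_i}(0)v,v\rangle\le(\sum_it_iD_i)\|v\|^2$.
\end{itemize}

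\textbf{Step 2.} The conclusion is right, but the inference is invalid as written. The $L^{e_i}(0)$ do not commute and $M$ has no positive form. So the eigenvalues of $\sum_it_iL^{e_i}(0)$ are not controlled by those of the summands, and $L_U(0)v=0$ does not give $L^{e_i}(0)v=0$. For example, $\bigl(\begin{smallmatrix}1&0\\0&0\end{smallmatrix}\bigr)+\bigl(\begin{smallmatrix}1&-1\\0&0\end{smallmatrix}\bigr)$ kills $(1,2)^T$, but neither summand does. You must take traces over the whole top $E$, which is a sum of complete $A_i$-tops. Then $h\dim E=\sum_it_i\Tr_EL^{e_i}(0)\ge0$, and $h=0$ forces each $L^{e_i}(0)$ to vanish on $E$. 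Finally $L_U(-1)=\sum_it_iL^{e_i}(-1)$ kills $E$, because partial translations kill vacuum $A_i$-tops. No $C_2$ input is needed, unlike your appeal to generation in low weight.

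\textbf{Step 3.} This is left open, and the routes you list are circular. A non-split extension is exactly a module you cannot assume unitarizable. No positive trace form on $A(U)$ is actually produced. The missing idea is projectivity of the unit:
\begin{itemize}
\item Apply the same trace argument to the generalized weight-zero space of an epimorphism $N\twoheadrightarrow U$. All partial energies vanish there, so its vectors are vacuum-like, and a preimage of $\one$ splits the map.
\item $C_2$-cofiniteness makes the grading-restricted generalized modules a finite braided Grothendieck--Verdier category (McRae).
\item Each indecomposable is non-negligible, because a nonzero map onto the simple projective unit splits.
\item The Etingof--Penneys theorem for moderate growth then gives rigidity.
\item In a rigid category with projective unit every object is projective, so the category is semisimple and $U$ is rational.
\end{itemize}
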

\begin{proof}[Proof sketch]
We begin with $C_2$-cofiniteness. For every simple $A_i$-module, the
quotient by its $C_2$ subspace is finite-dimensional \cite{ABD}. Since
there are only finitely many simple $A_i$-modules, the energies
occurring in these quotients are bounded by some $D_i$. By regularity,
$U$ decomposes as a direct sum of $A_i$-modules. If
$v\in U_n\cap C_2(U)^\perp$, then $v$ has no component of
$A_i$-energy above $D_i$, since such components already lie in
$C_2(U)$. As the $L^{e_i}(0)$ are self-adjoint, we get
\begin{equation}\label{eq:C2}
 n\|v\|^2=\sum_i t_i\langle L^{e_i}(0)v,v\rangle
       \le\Bigl(\sum_i t_iD_i\Bigr)\|v\|^2.
\end{equation}
So the $C_2$ quotient vanishes above a finite weight, and $U$ is
$C_2$-cofinite.

Next, let $M$ be a simple $U$-module with lowest weight $h$ and top
space $E$. Because the gradings are compatible, $E$ is a sum of
complete tops of $A_i$-modules. Each partial energy $L^{e_i}(0)$ is
diagonalizable on $E$ with nonnegative real eigenvalues, so
\begin{equation}\label{eq:toptrace}
 h\dim E=\sum_i t_i\Tr_E L^{e_i}(0)\ge0.
\end{equation}
If $h=0$, then every partial energy vanishes on $E$. Since the vacuum
is the only simple $A_i$-module of weight zero, each partial
translation annihilates $E$, and hence $L(-1)E=0$. Any nonzero $v\in E$
then gives a module map $U\to M$, $a\mapsto a_{-1}v$, and so
$M\cong U$.

Applying the same trace argument to the generalized weight-zero space
of an epimorphism $N\twoheadrightarrow U$, we find a vacuum-like
preimage of $\one$, which splits the epimorphism. So the vacuum module
is projective. The category of grading-restricted generalized $U$-modules
is a finite braided category with Grothendieck--Verdier duality, by
$C_2$-cofiniteness \cite{McRae}. For any indecomposable object, a
nonzero evaluation onto the unit splits because the unit is projective
and simple, so every indecomposable is non-negligible. A finite
category has at most exponential growth, so the rigidity theorem of
Etingof and Penneys for moderate growth applies \cite{EP}. In a rigid
category with projective unit every object is projective. Hence the
category is semisimple and $U$ is rational. Note that this argument
does not assume that the nonvacuum modules are unitary.
\end{proof}

\section{Branching determines the boundary fusion ring}\label{sec:branching}
To find how $B$ and $H$ decompose as $U$-modules, we need a uniform
bound on the lowest weights of $U$-modules. This uses one more piece of
finite input: a rational lattice--Potts subalgebra
$\mathcal T\subset U$ of central charge $14/5$, commuting with
$\PF_{13}=K(\mathfrak{sl}_2,13)$. The product
$A=\mathcal T\otimes\PF_{13}$ has pointed charges
$(j,k)\in(\ZZ/13)^2$ with quadratic form $-(j^2+k^2)/13$.
In $U_2$ there are joint eigenvectors at Cartan weights
$(\pm1,\pm5),(0,\pm8)$ with $\mathcal T$-energy $14/13$, parafermion
energy $12/13$, and zero complementary energy. They lie on the
simple-current line $k=-5j$. In the zero-complementary-energy subalgebra
relative to $A$, positivity, nonzero inverse-charge products and
associativity imply that the supported invertible constituents have multiplicity one and
subgroup support; integrality of the grading makes this subgroup
isotropic. Adjoining these vectors to $A$ therefore gives a regular
extension inside $U$,
\begin{equation}\label{eq:core}
 A_L=\bigoplus_{k=-5j}\mathcal T_{1,j}\otimes\PF_{0,k}\subset U,
 \qquad \Rep(A_L)\simeq\mathrm{Fib}\boxtimes\mathrm{SO}(3)_{13}.
\end{equation}
The extension is free on charges and has fourteen simple modules,
denoted $x_a=(1,a)$ and $y_a=(\tau,a)$, $0\le a\le6$.

For a $U$-top $E$, let $x_a,y_a$ also denote the multiplicities of
these complete tops, and put $D=\dim E$. If $L,Q,P,Z$ denote the
traces of the left affine energy, $H_R(0)^2$, Potts energy, and
parafermion energy, respectively, \eqref{eq:positive} gives
\begin{equation}\label{eq:W}
 hD=W=L+\frac{83}{6084}Q+\frac{25}{6}P+\frac{25}{26}Z.
\end{equation}
The dimensions and traces for the fourteen tops are obtained by a
finite minimization over charges. For $a=0,\ldots,6$ they are
\[
\begin{array}{c|rrrrrrr}
 a&0&1&2&3&4&5&6\\\hline
 \dim x_a&1&1&5&9&2&14&6\\
 4680W(x_a)&0&600&10440&56640&5520&148560&54240\\
 \dim y_a&1&7&15&2&14&6&2\\
 4680W(y_a)&7800&36960&110880&4080&81480&18240&4920
\end{array}
\]
Since the full $U$-top $E$ carries a complete representation of the right
$\mathfrak{sl}_2$, its multiplicities satisfy $y_2\ge x_5$ and
$y_4\ge x_5+x_6$. The following exact identity then bounds the trace:
\begin{align}\label{eq:topbound}
 4680(\tfrac{17}{10}D-W)
 ={}&8460(y_2-x_5)+28716(y_4-x_5-x_6)\notag\\
 &+7956x_0+7356x_1+29340x_2+14964x_3+10392x_4\notag\\
 &+22212x_6+156y_0+18732y_1+11832y_3\notag\\
 &+1188y_4+29496y_5+10992y_6\ge0.
\end{align}
So every simple $U$-module has lowest weight $0\le h\le17/10<2$. Note
that the estimate only uses traces on the top; we do not need the
partial energies to be simultaneously diagonalizable, nor a positive
form on $E$.

The last finite inputs are that the zero modes act irreducibly on the
eighty currents of $B$ outside $U$, and on each of the four mixed
lattice tops of dimension eighty-one. We check this exactly by
exhibiting a unique joint eigenline, showing that it spans cyclically,
and verifying adjoint identities for the positive metric. Any further
summand with integral weights would have lowest weight at least two,
which the bound rules out. This gives the decompositions
\begin{equation}\label{eq:branch}
 B\downarrow_U=U\oplus X,\qquad
 H\downarrow_U=\one\oplus X\oplus Y\oplus Y^*,
 \quad h_X=h_Y=1,
 \quad (\dim X_{\rm top},\dim Y_{\rm top})=(80,81).
\end{equation}

Put $\mathcal C=\Rep(U)$ and $\mathcal F=\mathcal C_H$, the category
of all right modules over the extension algebra. Holomorphicity makes
$H$ Lagrangian, so $\mathcal C\simeq Z(\mathcal F)$ \cite{DMNO}.
Since all nonvacuum weights are strictly positive, character
asymptotics show that the categorical dimensions are positive, so
$\mathcal F$ is pseudo-unitary. Induction to the center gives
\begin{equation}\label{eq:rank}
 \operatorname{rank}\mathcal F=\dim\End_{\mathcal C}(H)
 =\begin{cases}4,&Y\not\cong Y^*,\\6,&Y\cong Y^*.
 \end{cases}
\end{equation}
The lattice extension $H/B$ gives three invertible objects with
trivial associator. Concretely, in the discriminant form
$q(a,b)=(2a^2+b^2)/3$, the isotropic lines $(a,a)$ and $(a,-a)$ are
transverse; induction from the second realizes $\mathrm{Vec}_{\ZZ/3}$ inside
$\mathcal F$. Moreover $X$ is not invertible. If it were, $B=U\oplus X$
would be a $\ZZ/2$ extension, and so $E_6\oplus A_2$ would have an
involution whose fixed Lie algebra has rank two. But the fixed rank of
any such involution is at least $4+1=5$.

Suppose first that the rank is four. Then $\mathcal F$ is a
pseudo-unitary near-group category for $\ZZ/3$, and the classification
leaves the multiplicities $m=0,2,3,6$ \cite{Larson}. The case $m=0$ is
the $\ZZ/2$ extension we have just excluded.
When $m=2$, the classification
of \cite{EGO} reduces us to an $A_4$ orbifold of $V_{E_8}$, which has
at least
$(248-3\cdot8-8\cdot4)/12=16$ currents. When $m=6$, the balanced
induction pattern would give $N_{XX}^{X}=4/3$, which signed Galois
symmetry and Verlinde integrality forbid. For $m=3$ and the other
$m=6$ patterns, we induce to the center and cancel the principal
parts. The possible numbers of currents are then

\[
 \begin{array}{c|c}
 m&\dim U_1\ \text{allowed by the character projections}\\\hline
 3&15+A+10B\quad\text{or}\quad14+42A\\
 6&8+A\quad\text{or}\quad5+12P
 \end{array}
 \qquad A,B,P\in\ZZ_{\ge0}.
\]
None of these equals six. In computing these projections we subtract
the prescribed polar terms from modular functions with finite image,
and what remains is a constant invariant vector. So the exclusions
apply to the full characters, not only to truncated $q$-series. So the rank is six. Reading off from $\End(H)$, the complex fusion
algebra is $\CC\oplus\CC\oplus M_2(\CC)$ \cite{BJ}. If $\mathcal F$
were pointed it would have dimension six; comparing dimensions of the
extensions would give $\FPdim B=2$ and $\FPdim X=1$, which we excluded
above. Hence there are exactly three invertible objects. Since the
fusion is noncommutative, the three noninvertible simples form a
single free orbit under the cyclic group, and the right action is by
inverses. Frobenius reciprocity then gives
\begin{equation}\label{eq:ringm}
 \rho^2=\one+m(\rho+\alpha\rho+\alpha^2\rho),\qquad m\ge1.
\end{equation}
It remains to show that $m=1$, and this is an arithmetic argument. Set
$\Delta=9m^2+4$ and let $n_j(\rho+\alpha\rho+\alpha^2\rho)$ be the
forgetful images of the residual central simples, with twists $\theta_j$.
Induction and the first two Frobenius--Schur indicators \cite{NS} give
\begin{align}\label{eq:moments}
 \sum_j n_j^2&=4m^2+2,&
 \sum_j n_j^2\theta_j&=-\frac{m^2+m\sqrt\Delta}{2},\notag\\
 \sum_j n_j^2\theta_j^2&=\frac{-m^2+(2\varepsilon-m)\sqrt\Delta}{2},&
 \varepsilon&=\nu_2(\rho)\in\{\pm1\}.
\end{align}
Write $\Delta=t^2D$ with $D$ squarefree. The root-of-unity length
bound \cite{Larson}, applied to the first moment, gives, for $D\ne10$,
\[
 4m^2+2\ge\frac{mt}{2}\varphi(2D)
             \ge\frac{6m\sqrt{9m^2+4}}{\sqrt{13}}.
\]
For $m>1$ this contradicts
$36m^2(9m^2+4)-13(4m^2+2)^2=4(m^2-1)(29m^2+13)>0$.

When $D=10$ we average over Galois orbits instead. Each unit of the
negative rational part then costs at least one root of unity, and each
unit of $-\sqrt{10}$ costs at least eight; this excludes $m\ge4$. The
remaining case $m=t=2$ saturates the bound. Since the second moment
also lies in $\QQ(\sqrt{10})$, Galois averaging forces its sum of
squared roots to be $2$, whereas the second moment is $-2$ or
$-2-4\sqrt{10}$. Hence $m=1$.
The six remaining twists are then the
roots of unity $e^{2\pi ik/13}$ with $k$ a quadratic non-residue
mod $13$, each occurring once.
In particular the center has twelve simple objects.

\section{Complete unitarity and categorical identification}\label{sec:unitarity}
So far we know the fusion ring and the modular data. To upgrade this
to an equivalence of unitary ribbon categories we pass to conformal
nets. The lattice realization gives strong
locality and bounded localized vertex operators
\cite{CKLW,TenFields}. By regularity, the subnet generated by bounded
insertions has finite index \cite{TenFusion}, so it is completely
rational by Longo's theorem \cite{Longo}, and the comparison with
$\mathcal N_U$ in \cite[Lemma 7.6]{TenFields} transfers complete
rationality to $\mathcal N_U$. The vacuum representation of
$\mathcal N_H$ restricts to $\mathcal N_U$ as $\one\oplus X\oplus2Y$:
thanks to polynomial energy bounds, the projections onto the algebraic
summands reduce the closed smeared fields. Conversely, an intertwiner
of the nets commutes with rotations, so on finite-energy vectors it
restricts to a module map. Hence multiplicities and inequivalence are
preserved. Strong additivity, together with the fact that the vacuum
energy space is one-dimensional, shows that
$\mathcal N_U\subset\mathcal N_H$ is locally irreducible, and complete
rationality then gives finite index \cite[Proposition 2.3]{KL}.

It follows that the category of boundary sectors of the net inclusion
has the same noncommutative rank-six fusion algebra, with a trivially
twisted $\ZZ/3$ subcategory. If it were pointed, then multiplicativity
of the index and $[\mathcal N_H:\mathcal N_B]=3$ would give
$[\mathcal N_B:\mathcal N_U]=2$. The corresponding involution lifts to
$B$ \cite{CKLW}, and the fixed-rank argument above rules this out. With
exactly three invertibles, the induction and indicator argument of the
previous section goes through and gives twelve sectors of the net.
Reconstruction \cite[Theorem 9.5 and Remark 9.8]{HTNet} assigns to each of them a
unitary weak $U$-module. An irreducible sector gives an ordinary simple
module. Indeed, by regularity and by working on the common core of
bounded-energy vectors, the projection onto the closure of an ordinary
summand commutes with the modes, with their closures and adjoints, and
hence with the net. So it is the identity. The resulting comparison
functor is fully faithful and preserves $L_0$. The twelve simple
sectors of the net therefore account for all twelve simple
$U$-modules, and every simple $U$-module is unitary.

Positivity of the canonical fusion forms needs one more ingredient: a
comparison of annulus insertions. For each localized pair $(A,B)$ of
incoming and outgoing annuli and each sector $M$, we need

\begin{equation}\label{eq:annulus}
 \pi_M\!\left(\overline{B_UY_U(v,z)A_U}\right)
       =\overline{B_MY_M(v,z)A_M},
\end{equation}
where the bar denotes bounded extensions of the sesquilinear insertion
forms. We first use the same boundary reparametrization to localize both
annulus operators in a common interval. We then transport them to
the sector $M$ and undo the reparametrization, using the same choice
of implementing operator for both factors
\cite[Proposition 4.17 and Theorem 9.5]{HTNet}.
On finite-energy vectors the formal matrix coefficients agree, by
reconstruction and covariance; the inverses of incoming annulus
operators are only applied on their dense ranges. Factoring a smaller
round annulus through the incoming one provides a dense set of test
vectors, and analytic continuation extends the equality over the whole
insertion region. In the vacuum sector this places the bounded
insertions inside $\mathcal N_U(I)$, and Haag duality shows that the
net of bounded insertions and the net of smeared fields coincide on
the common vacuum Hilbert space. The central factors of the two lifts
cancel, including the phases coming from fractional lowest weights, so
we obtain \eqref{eq:annulus} without choosing lifts globally. We can
now apply Tener's positivity theorem \cite[Corollary 4.6]{TenFusion}
to every unitarizable module, and this proves complete unitarity.

Next we identify the category. The extension $H/U$ now defines a
commutative Q-system in $\Rep(U)$, and the category of all right modules
over this extension algebra is a unitary fusion category \cite{Gui}.
It has fusion ring \eqref{eq:HI}, and its pointed
part has trivial associator. This removes the degree-three obstruction
in the classification of generalized Haagerup categories, and the
remaining obstruction vanishes because $H^2(\ZZ/3,\CC^\times)=0$. By
Izumi's classification the category is therefore $\HH_2$ or $\HH_3$
\cite[Theorems 5.9 and 9.1]{Izumi}. These two are Morita equivalent
\cite{GS}, so the Lagrangian-center theorem gives
$\Rep(U)\simeq Z(\HH_3)$. Since the spherical structures are positive,
Reutter's uniqueness theorem \cite{Reutter} upgrades this to a unitary
ribbon equivalence. We stress that this step relies on the
classification of fusion categories, and not on any claim that the $S$
and $T$ matrices determine the category.

\section{Characters and the two full theories}\label{sec:characters}
Let $q=e^{2\pi i\tau}$ and
$\chi_i=\Tr_{M_i}q^{L_0-1/3}$. In the Haagerup modular representation
the only possible nonvacuum polar term is
$\gamma q^{-7/39}$ in the twist-$2/13$ coordinate, with
$\gamma\in\ZZ_{\ge0}$. By the congruence property the modular
representation has finite image \cite{DLN}. Two character vectors with
the same principal part therefore differ by a holomorphic vector of
weight zero, that is, by an invariant constant vector. But there is no
nonzero such vector: the only coordinate fixed by $T$ has an $S$-image
with nonzero vacuum component. Using the fundamental matrix $\Xi$ of
Evans and Gannon, we conclude
\begin{equation}\label{eq:gamma}
 \boldsymbol\chi=\Xi_{:,0}+\gamma\Xi_{:,d_3},\qquad
 \dim U_1=6+13\gamma.
\end{equation}
Since $\dim U_1=6$ by \eqref{eq:lowgrades}, we must have $\gamma=0$.
This fixes every coefficient of every character; for example,
\begin{equation}\label{eq:vacchar}
 \chi_U=q^{-1/3}(1+6q+120q^2+956q^3+6010q^4+\cdots).
\end{equation}
We list the twelve sectors below, with $d=(3+\sqrt{13})/2$ and
$s=1+d^2=(13+3\sqrt{13})/2$. Note that $\dim M_h$, the dimension of
the lowest-weight space, is not the categorical dimension $d_i$.
\begin{center}
\begin{tabular}{@{}lrrl@{\qquad}lrrl@{}}
\toprule
Sector&$h$&$\dim M_h$&$d_i$&Sector&$h$&$\dim M_h$&$d_i$\\
\midrule
$0$&$0$&$1$&$1$&$d_1$&$6/13$&$7$&$3d$\\
$b$&$1$&$80$&$d^2$&$d_2$&$11/13$&$42$&$3d$\\
$a$&$1$&$81$&$s$&$d_3$&$15/13$&$119$&$3d$\\
$c_0$&$1$&$81$&$s$&$d_4$&$5/13$&$5$&$3d$\\
$c_1$&$1/3$&$3$&$s$&$d_5$&$7/13$&$13$&$3d$\\
$c_2$&$2/3$&$27$&$s$&$d_6$&$8/13$&$14$&$3d$\\
\bottomrule
\end{tabular}
\end{center}

There are two natural full CFTs built from $U$. One is the canonical
diagonal theory of the completely rational net; the other is the
$(E_8)_1$ theory of the holomorphic extension chosen below. Their
partition functions are
\begin{equation}\label{eq:partitions}
 Z_{\rm diag}=\sum_{i=0}^{11}|\chi_i|^2,\qquad
 Z_{E_8}=|\chi_{E_8}|^2
       =|\chi_0+\chi_b+2\chi_a|^2.
\end{equation}
The diagonal theory comes from the Longo--Rehren construction
\cite{Rehren}. For the second, we implement the symmetry--subalgebra
correspondence discussed in \cite[Example$^{\mathrm{ph}}$~7.6]{MR} and
\cite[Section~4.3]{GR} by transporting the canonical Lagrangian algebra
$I_{\HH_3}(\one)$ from $Z(\HH_3)$ to $\Rep(U)$. It is a
positive Q-system with trivial twist, so it gives a unitary holomorphic
VOA extension $H'$ \cite{HKL,Gui}, which is of CFT type because the
nonvacuum weights are positive. By the classification of holomorphic
VOAs at $c=8$, $H'\cong V_{E_8}$ \cite{DM}. Writing $A_3$ for the
corresponding algebra in $\Rep(U)$, we have
\begin{equation}\label{eq:defects}
 \Rep(U)_{A_3}\simeq\HH_3,\qquad
 {}_{A_3}\Rep(U)_{A_3}\simeq\HH_3\boxtimes\HH_3^{\rm op}.
\end{equation}
The opposite category accounts for the right action in
$(x\boxtimes y)\cdot m=x\otimes m\otimes y$. These defects preserve
$U$ and the stress tensor, but need not commute with all of the $E_8$
currents. The first factor is the Haagerup symmetry.

We emphasize that $H'$ need not be the original lattice extension
$H/U$: for the latter we have only shown that the category of all right
modules over its extension algebra in $\Rep(U)$ is equivalent to
$\HH_2$ or $\HH_3$. Since $\chi_a=\chi_{c_0}$, the unflavoured
partition function cannot tell the two apart. The diagonal theory and
the $E_8$ theory, on the other hand, are easy to distinguish. In the
diagonal theory the Verlinde lines form $Z(\HH_3)$, and the lightest
nontrivial scalar has dimension $2/3$; in the $E_8$ theory the lightest
nontrivial scalar has dimension $2$.

Our construction thus realizes the Evans--Gannon character vector
through explicit local fields and establishes Haagerup symmetry
in the $(E_8)_1$ theory. It provides a concrete setting for studying
how this symmetry acts on local operators and constrains their
correlation functions.

\section*{Acknowledgement}
  The author thanks Clay Córdova and Terry Gannon for valuable discussions.
\section*{AI-use disclosure}
AI systems were used to help with numerics and with preparation of this draft. The author takes responsibility
for the accuracy and integrity of the final contents.

\end{document}